\newtheorem{definition}{Definition}
\newtheorem{proposition}{Proposition}
\newtheorem{example}{Example}
\newtheorem{algorithm}{Algorithm}
\newcommand{\rrbr}{]\!]}
\newcommand{\llbr}{[\![}
\begin{document}

\title{A bounded operator approach to technical indicators without lag}

\author{Fr\'{e}d\'{e}ric BUTIN\footnote{Universit\'e de Lyon, Universit\'e Lyon 1, CNRS, UMR5208, Institut Camille Jordan, 43 blvd du 11 novembre 1918, F-69622 Villeurbanne-Cedex, France, butin@math.univ-lyon1.fr}}

\maketitle

\begin{abstract}
\noindent In the framework of technical analysis for algorithmic trading we use a linear algebra approach in order to define classical technical indicators as bounded operators of the space $l^\infty(\mathbb{N})$. This more abstract view enables us to define in a very simple way the no-lag versions of these tools.\\
Then we apply our results to a basic trading system in order to compare the classical Elder's impulse system with its no-lag version and the so-called Nyquist-Elder's impulse system.\\

\noindent\textbf{Keywords:} bounded operators, technical indicators without lag, linear algebra, algorithmic trading.\\

\noindent\textbf{JEL Classification:} C60, C63, G11, G15, G17.
\end{abstract}

\section{Moving averages as bounded operators}\label{firstsection}

\subsection{\textbf{Aims and organization of the paper}}

\noindent Delay in response is a major drawback of many classical technical indicators used in algorithmic trading, and this often leads to a useless or wrong information. The aim of this paper is to define in terms of bounded operators some of these classical indicators, in order to give and study their no-lag versions: these no-lag versions provide a better information that is closer to the instantaneous values of the securities, thus a better return rate of the trading system in which they occur.\\

\noindent For this purpose, we will define moving averages and exponential moving averages as bounded operators in \textbf{Section~\ref{firstsection}}. Then, \textbf{Section~\ref{secondsection}} will be devoted to the definition and the properties of the lag (in particular Proposition~\ref{lag2}); we will also make use of Nyquist criterium. Finally, all these results will enable us to give no-lag versions of famous indicators that we will compare in \textbf{Section~\ref{section3}}.\\

\noindent We denote by $E=l^\infty(\mathbb{N})$ the vector space of bounded real sequences, endowed with the norm $\|\cdot\|_\infty$ defined by $\|x\|_\infty=\sup_{n\in\mathbb{N}}|x_n|$, and by $L(E)$ the algebra of continuous endomorphisms of $E$. Then, every sequence of values\footnote{For example daily or monthly values.} of a security can be identified with an element of $E$.

\subsection{\textbf{Weighted moving averages as bounded operators}}

\noindent Let us denote by $H$ the affine hyperplane of $\mathbb{R}^p$ with equation $\displaystyle{\sum_{j=0}^{p-1}w_j=1}$ in the canonical basis, by $K$ the standard $(p-1)-$simplex (also called standard simplex of dimension $p-1$), i.e.
$$K:=\left\{\mathbf{w}=(w_0,\dots,\,w_{p-1})\in[0,\,1]^p\ /\ \sum_{j=0}^{p-1}w_j=1\right\},$$
and by $K^*$ the subset of $K$ that consists of elements whose no coordinate is zero.

\begin{definition}\label{moymob}
Let $w=(w_0,\,w_1,\dots,\,w_{p-1})$ be in $K^*$. The \emph{weighted moving average} with $p$ periods and the weights defined by the vector $w$ is the map $M_w$ from $E$ to $E$ defined, for every $x\in E$, by $M_w(x)=y$, where
$$y_n=\left\{\begin{array}{ll}
\frac{\displaystyle{\sum_{j=p-1-n}^{p-1}w_jx_{n-p+1+j}}}{\displaystyle{\sum_{j=p-1-n}^{p-1}w_j}} & \textrm{if}\ n\in\llbr0,\,p-2\rrbr\\
\displaystyle{\sum_{j=0}^{p-1}w_jx_{n-p+1+j}} & \textrm{if}\ n\geq p-1
\end{array}\right..$$
\end{definition}

\begin{proposition}\label{moymob2}
For every $w\in K^*$, $M_w$ belongs to $L(E)$.
\end{proposition}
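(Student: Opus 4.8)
The plan is to verify, one at a time, the three conditions that together amount to $M_w \in L(E)$: that $M_w$ sends $E$ into $E$ (which requires both that the formula is well defined and that the resulting sequence is bounded), that $M_w$ is linear, and that $M_w$ is continuous. Because $E$ is a normed space and $M_w$ will turn out to be linear, continuity is equivalent to boundedness, so after disposing of well-definedness and linearity the whole matter reduces to a single norm estimate.

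First I would check that the formula really produces a well-defined element of $E$. The only point at which well-definedness could fail is the division in the case $n\in\llbr0,\,p-2\rrbr$: the denominator $\sum_{j=p-1-n}^{p-1}w_j$ must be nonzero. This is exactly where the hypothesis $w\in K^*$ (and not merely $w\in K$) is used, since $K^*$ forces every $w_j>0$, making the partial sum strictly positive. I would also note that in both cases the indices $n-p+1+j$ appearing in the sums range only over $\{0,1,\dots,n\}$, hence are nonnegative, so each $x_{n-p+1+j}$ is a legitimate term of $x$.

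Linearity is then immediate: for each fixed $n$ the coordinate $y_n$ is a finite linear combination of coordinates of $x$ whose coefficients depend only on $w$ and $n$, so $M_w(\lambda x+\mu x')=\lambda M_w(x)+\mu M_w(x')$ coordinatewise. The main step is the boundedness estimate $\|M_w(x)\|_\infty\le\|x\|_\infty$. For $n\ge p-1$ one has $w_j\ge0$ and $\sum_{j=0}^{p-1}w_j=1$, so the triangle inequality gives $|y_n|\le\sum_{j=0}^{p-1}w_j|x_{n-p+1+j}|\le\|x\|_\infty$; for $n\in\llbr0,\,p-2\rrbr$ the same convexity argument applies after dividing by the positive normalizing factor $\sum_{j=p-1-n}^{p-1}w_j$, again yielding $|y_n|\le\|x\|_\infty$. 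Taking the supremum over $n$ shows at once that $M_w(x)\in E$ and that $M_w$ is bounded with $\|M_w\|\le1$ (equality being attained on constant sequences), whence $M_w\in L(E)$.

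I do not anticipate any real obstacle here, as the estimate is a routine consequence of the weights being nonnegative and summing (after renormalization) to one. The only subtleties worth flagging are ensuring the denominator in the initial case cannot vanish, which is precisely what $w\in K^*$ guarantees, and observing that the two index regimes $n\ge p-1$ and $n\in\llbr0,\,p-2\rrbr$ are both controlled by the very same convexity estimate, so no separate work is needed for the transient part of the sequence.
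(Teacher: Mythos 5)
Your proof is correct and follows essentially the same route as the paper: linearity is noted as immediate, and continuity comes from the estimate $\|M_w(x)\|_\infty\le\|x\|_\infty$ obtained by applying the triangle inequality to the nonnegative, (re)normalized weights in both index regimes. Your extra remark that $w\in K^*$ guarantees the denominator $\sum_{j=p-1-n}^{p-1}w_j$ is nonzero is a worthwhile observation the paper leaves implicit, but it does not change the argument.
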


\begin{proof}
The map $M_w$ is clearly linear. Let us prove that this map is continuous: for every $x\in E$, let us set $M_w(x)=y$ as in Definition~\ref{moymob}. Then, for every $n\in\llbr0,\,p-2\rrbr$,
\pagebreak

$$|y_n|\leq \frac{1}{\sum_{j=p-1-n}^{p-1}w_j}\sum_{j=p-1-n}^{p-1}w_j|x_{n-p+1+j}|\leq\frac{1}{\sum_{j=p-1-n}^{p-1}w_j}\sum_{j=p-1-n}^{p-1}w_j\|x\|_\infty,$$
hence $|y_n|\leq\|x\|_\infty$; and for every $n\geq p-1$,
$$|y_n|\leq \sum_{j=0}^{p-1}w_j|x_{n-p+1+j}|\leq\sum_{j=0}^{p-1}w_j\|x\|_\infty=\|x\|_\infty,$$
thus $\|M_w(x)\|_\infty\leq\|x\|_\infty$, which proves that $M_w$ is continuous.
\end{proof}

\begin{example}
$M_w^2(x)=y^{(2)}$, with $\displaystyle{y_n^{(2)}=\sum_{i=0}^{p-1}w_i\sum_{j=0}^{p-1}w_jx_{n-2p+2+i+j}}$ $\forall\ n\geq 2p-2$.
\end{example}

\begin{example}
More generally, for every $k\in\mathbb{N}$, $M_w^k(x)=y^{(k)}$, with
$$\displaystyle{y_n^{(k)}=\sum_{i_1=0}^{p-1}\sum_{i_2=0}^{p-1}\dots\sum_{i_k=0}^{p-1} w_{i_1}w_{i_2}\dots w_{i_k}x_{n-k(p-1)+i_1+i_2+\dots+i_k}}$$
for every $n\geq k(p-1)$.
\end{example}

\begin{example}
For every polynomial $\displaystyle{P=\sum_{k=0}^da_kX^k\in\mathbb{R}[X]}$, we have \mbox{$P(M_w)(x)=y$}, with
$\displaystyle{y_n=\sum_{k=0}^{d}a_ky_n^{(k)}}$ for every $n\geq d(p-1)$.\\
According to Proposition~\ref{moymob2}, $P(M_w)$ belongs to $L(E)$.
\end{example}

\subsection{\textbf{Exponential moving averages as bounded operators}}

\noindent Let us now define exponential moving averages\footnote{Note that these exponential moving averages are not weighted moving averages.}.

\begin{definition}\label{moymobexp}
Let $\alpha\in]0,\,1[$. The \emph{exponential moving average }of parameter $\alpha$ is the map $ME_\alpha$ from $E$ to $E$ defined, for every $x\in E$, by $ME_\alpha(x)=y$, where
$$y_n=\left\{\begin{array}{ll}
x_0 & \textrm{if}\ n=0\\
\alpha x_n+(1-\alpha)y_{n-1} & \textrm{if}\ n\geq1
\end{array}\right..$$
\end{definition}

\noindent For example, when $\alpha$ is equal to $\frac{2}{p+1}$, where $p\in\mathbb{N}^*$, then $p$ is called the \emph{number of periods} of $ME_\alpha$. In that case, we will denote $ME_\alpha$ by $ME_p$ instead of $ME_\frac{2}{p+1}$.\\

\noindent From Definition~\ref{moymobexp}, we immediately deduce the following proposition.

\begin{proposition}\label{moymobexp2}
Let $\alpha\in]0,\,1[$. Let $x\in E$, and set $y=ME_\alpha(x)$. Then for every $n\in\mathbb{N}$, we have $\displaystyle{y_n=(1-\alpha)^nx_0+\alpha\sum_{j=1}^n(1-\alpha)^{n-j}x_j}$.
\end{proposition}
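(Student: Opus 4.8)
The plan is to prove the closed formula by induction on $n$, since the sequence $y$ is defined by the first-order recurrence of Definition~\ref{moymobexp}. The statement to establish is
$$P(n):\qquad y_n=(1-\alpha)^nx_0+\alpha\sum_{j=1}^n(1-\alpha)^{n-j}x_j,$$
and the recurrence $y_0=x_0$, $y_n=\alpha x_n+(1-\alpha)y_{n-1}$ (for $n\geq1$) is exactly what an inductive argument is built to exploit.

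First I would check the base case $P(0)$. Here the sum $\sum_{j=1}^0$ is empty, hence equal to $0$ by convention, and $(1-\alpha)^0=1$, so the right-hand side reduces to $x_0$, which matches $y_0=x_0$. I would flag this empty-sum convention explicitly, since it is the one spot where a careless reader might object. Next, for the inductive step, I would assume $P(n-1)$ holds for some $n\geq1$ and substitute it into the recurrence: $y_n=\alpha x_n+(1-\alpha)y_{n-1}=\alpha x_n+(1-\alpha)\bigl[(1-\alpha)^{n-1}x_0+\alpha\sum_{j=1}^{n-1}(1-\alpha)^{n-1-j}x_j\bigr]$. Distributing the factor $(1-\alpha)$ turns $(1-\alpha)(1-\alpha)^{n-1}x_0$ into $(1-\alpha)^nx_0$ and raises each exponent in the sum by one, giving $\alpha\sum_{j=1}^{n-1}(1-\alpha)^{n-j}x_j$.

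The only step requiring a moment of care is absorbing the leftover term $\alpha x_n$ into the sum. The point is that $\alpha x_n=\alpha(1-\alpha)^{n-n}x_n$ is precisely the $j=n$ summand of $\alpha\sum_{j=1}^{n}(1-\alpha)^{n-j}x_j$, so appending it to the range $j=1,\dots,n-1$ extends the summation to $j=1,\dots,n$ and yields exactly $P(n)$. This reindexing is the "main obstacle," though it is a very mild one: it is purely a matter of recognizing the new term as the top index of the sum with exponent zero. By the principle of induction, $P(n)$ holds for every $n\in\mathbb{N}$, which completes the argument. I expect no analytic difficulty here, as the claim is a purely algebraic consequence of the defining recurrence and does not even require the hypothesis $\alpha\in{]0,1[}$ beyond ensuring $ME_\alpha$ is well defined.
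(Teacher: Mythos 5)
Your proof is correct, and it matches the paper's intent: the paper gives no explicit proof, stating only that the formula is ``immediately deduced'' from Definition~\ref{moymobexp}, which is precisely the routine induction on $n$ that you carried out. Your careful treatment of the empty sum at $n=0$ and the absorption of the $\alpha x_n$ term as the $j=n$ summand simply makes that implicit argument explicit.
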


\noindent In the same way as for weighted moving averages, we have the following proposition.

\begin{proposition}\label{moymobexp3}
For every $\alpha\in]0,\,1[$, $ME_\alpha$ belongs to $L(E)$.
\end{proposition}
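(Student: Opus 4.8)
The plan is to prove two things: that $ME_\alpha$ is linear, and that it is continuous (i.e. bounded as an operator on $E$). For both I would lean on the explicit closed form provided by Proposition~\ref{moymobexp2}, namely $y_n=(1-\alpha)^nx_0+\alpha\sum_{j=1}^n(1-\alpha)^{n-j}x_j$, rather than working directly with the recursive definition, since the closed form displays $y_n$ as an explicit linear combination of the inputs $x_0,\dots,x_n$.

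First I would establish linearity. Since each coordinate $y_n$ is a fixed linear combination of $x_0,\dots,x_n$ with coefficients independent of $x$, the map $x\mapsto y$ is linear; this is essentially immediate from the closed form and can be dispatched in one sentence, exactly as in the proof of Proposition~\ref{moymob2}.

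The substantive step is boundedness, and here the key observation is that the coefficients in the closed form are nonnegative and sum to exactly $1$ for every fixed $n$. Concretely, the geometric sum $(1-\alpha)^n+\alpha\sum_{j=1}^n(1-\alpha)^{n-j}$ telescopes: since $\alpha\sum_{j=1}^n(1-\alpha)^{n-j}=\alpha\cdot\frac{1-(1-\alpha)^n}{1-(1-\alpha)}=1-(1-\alpha)^n$, the total is $1$. I would therefore bound
$$|y_n|\leq (1-\alpha)^n|x_0|+\alpha\sum_{j=1}^n(1-\alpha)^{n-j}|x_j|\leq\left((1-\alpha)^n+\alpha\sum_{j=1}^n(1-\alpha)^{n-j}\right)\|x\|_\infty=\|x\|_\infty,$$
using $\alpha\in\,]0,1[$ so that $1-\alpha\in\,]0,1[$ and all coefficients are genuinely nonnegative. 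Taking the supremum over $n$ yields $\|ME_\alpha(x)\|_\infty\leq\|x\|_\infty$, which shows simultaneously that $y\in E$ (so the map is well defined into $E$) and that $ME_\alpha$ is continuous with operator norm at most $1$.

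I expect the main (minor) obstacle to be purely bookkeeping: verifying the geometric-sum identity that makes the coefficients sum to $1$, and being careful that the $n=0$ case is subsumed (where the sum is empty and $y_0=x_0$, giving $|y_0|\leq\|x\|_\infty$ trivially). There is no real difficulty here because the positivity of the weights, guaranteed by $\alpha\in\,]0,1[$, is exactly what lets the triangle inequality collapse to a convex combination, mirroring the weighted-moving-average argument; the only new ingredient relative to Proposition~\ref{moymob2} is the use of the explicit summed form from Proposition~\ref{moymobexp2} in place of the finite-window weights.
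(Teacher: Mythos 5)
Your proposal is correct and follows essentially the same route as the paper: both apply the triangle inequality to the closed form from Proposition~\ref{moymobexp2} and use the geometric sum $\alpha\sum_{j=1}^n(1-\alpha)^{n-j}=1-(1-\alpha)^n$ to show the coefficients sum to $1$, yielding $\|ME_\alpha(x)\|_\infty\leq\|x\|_\infty$. The only cosmetic difference is that you make the convex-combination interpretation and the treatment of $n=0$ explicit, which the paper leaves implicit.
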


\begin{proof}
The map $ME_\alpha$ is clearly linear. Let us prove that it is continuous: for every $x\in E$, let us set $ME_\alpha(x)=y$ as in Definition~\ref{moymobexp}. Then, for each $n\in\mathbb{N}$,
$$|y_n|\leq (1-\alpha)^n|x_0|+\alpha\sum_{j=1}^{n}(1-\alpha)^{n-j}|x_j|\leq\|x\|_\infty\left((1-\alpha)^n+\sum_{j=1}^{n}(1-\alpha)^{n-j}\right),$$
hence
$$|y_n|\leq \|x\|_\infty\left((1-\alpha)^n+\alpha\frac{1-(1-\alpha)^n}{\alpha}\right)=\|x\|_\infty,$$
thus $\|ME_\alpha(x)\|_\infty\leq\|x\|_\infty$, which proves that $ME_\alpha$ is continuous.
\end{proof}

\section{Lag of a weighted moving average}\label{secondsection}

\subsection{\textbf{Definition and fundamental property}}

\noindent Here, we use some results obtained by Patrick Mulloy in the article~\cite{M94} (see also \cite{B17} and \cite{E01}).\\

\noindent We denote by $\tau$ the time difference between two measures: if $x_n$ is the value at the time $t_n$, then $\tau=t_n-t_{n-1}$.

\begin{definition}\label{lag2}
Let $w=(w_0,\,w_1,\dots,\,w_{p-1})$ be in $\mathbb{R}^p$. Let $M$ be a linear map from $E$ to $E$. For every $x\in E$, let us set $y=M(x)$. We assume that $M$ satisfies
$$\forall\ n\geq p-1,\ y_n=\displaystyle{\sum_{j=0}^{p-1}w_jx_{n-p+1+j}}.$$
Then the \emph{lag} of $M$ is defined by $lag(M)=\displaystyle{\tau\sum_{j=0}^{p-1}w_j(p-1-j)}$.\\
\end{definition}

\noindent Let us note that $lag(id)=0$. Let us also note that this definition is of course valid for every weighted moving average (i.e.~with $w\in K^*$). However, it is valid for more general linear maps.

\begin{example}\label{ax1}
For $w=\frac{1}{p}(1,\,1,\dots,\,1)\in K^*$, the lag of $M_w$ (\emph{classical moving average} with $p$ periods) is $lag(M)=\displaystyle{\frac{\tau}{p}\sum_{j=0}^{p-1}(p-1-j)=\frac{(p-1)\tau}{2}}$.\\
\end{example}

\begin{example}\label{ax2}
For $w=\frac{2}{p(p+1)}(1,\,2,\dots,\,p)\in K^*$, the lag of $M_w$ (\emph{simple weighted moving average} with $p$ periods) is given by $$lag(M)=\displaystyle{\frac{2\tau}{p(p+1)}\sum_{j=0}^{p-1}(j+1)(p-1-j)=
\frac{2\tau}{p(p+1)}\sum_{j=1}^{p}j(p-j)=\frac{(p-1)\tau}{3}}.$$
\end{example}

\noindent The following proposition establishes the fundamental property of the lag.

\begin{proposition}\label{lag2}
For every polynomial $\displaystyle{P=\sum_{k=0}^da_kX^k\in\mathbb{R}[X]}$, we have the formula $\displaystyle{lag(P(M_w))=lag(M_w)\sum_{k=0}^dka_k=lag(M_w)P'(1)}$.
\end{proposition}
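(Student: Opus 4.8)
The plan is to encode every operator of the relevant form by a single generating polynomial and to recognize the lag as $\tau$ times the derivative of that polynomial at $1$. To a linear map $M$ satisfying $y_n=\sum_{j=0}^{q-1}W_jx_{n-q+1+j}$ for $n\geq q-1$ I will associate the polynomial $G_M(X)=\sum_{j=0}^{q-1}W_jX^{q-1-j}\in\mathbb{R}[X]$. A direct reindexing ($i=q-1-j$) then yields the two identities I will use throughout: $G_M(1)=\sum_{j=0}^{q-1}W_j$ and $G_M'(1)=\sum_{j=0}^{q-1}(q-1-j)W_j$, the latter being exactly $lag(M)/\tau$. In particular $G_{M_w}(1)=\sum_{j}w_j=1$ (since $w\in K^*$) and $\tau\,G_{M_w}'(1)=lag(M_w)$.

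The heart of the argument is that this correspondence turns composition into multiplication of polynomials. Using the explicit expression for $y_n^{(k)}$ from the Examples above, the steady-state weights of $M_w^k$ are the $k$-fold convolution of those of $M_w$, so that $G_{M_w^k}=\left(G_{M_w}\right)^k$. Since the coefficient of $x_{n-d(p-1)+j}$ coming from the $k$-th term $a_ky_n^{(k)}$ lands, after the shift $j'\mapsto j=j'+(d-k)(p-1)$, on the monomial of degree $k(p-1)-j'$, the contributions assemble automatically into $G_{P(M_w)}(X)=\sum_{k=0}^{d}a_k\left(G_{M_w}(X)\right)^k=P\!\left(G_{M_w}(X)\right)$, the lower powers of $M_w$ simply supplying the lower-order monomials. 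Thus $G_{P(M_w)}=P\circ G_{M_w}$.

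It then remains to differentiate. Applying the lag formula to $P(M_w)$ and the chain rule to $P\circ G_{M_w}$ gives
$$lag(P(M_w))=\tau\,\left(P\circ G_{M_w}\right)'(1)=\tau\,P'\!\left(G_{M_w}(1)\right)\,G_{M_w}'(1).$$
Substituting $G_{M_w}(1)=1$ and $\tau\,G_{M_w}'(1)=lag(M_w)$ yields $lag(P(M_w))=P'(1)\,lag(M_w)$, and since $P'(1)=\sum_{k=0}^{d}ka_k$ this is exactly the claimed formula.

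The step I expect to require the most care is the bookkeeping around the steady-state windows: the convolution identity $G_{M_w^k}=(G_{M_w})^k$ holds only for $n$ large enough (here $n\geq d(p-1)$), and the successive powers live on windows of different widths, so I must confirm that realigning them to the common width $d(p-1)+1$ and padding by zero weights affects neither $G_M(1)$ nor $G_M'(1)$. Once $G_{P(M_w)}=P\circ G_{M_w}$ is secured, the decisive input is $G_{M_w}(1)=1$, that is, the assumption $w\in K^*$ guaranteeing a genuine weighted moving average; this is precisely what collapses $P'\!\left(G_{M_w}(1)\right)$ to $P'(1)$ and would fail for an arbitrary linear map of the form covered by Definition~\ref{lag2}.
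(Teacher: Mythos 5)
Your proposal is correct, but it reaches the conclusion by a genuinely different route than the paper. The paper argues directly on the multi-index sums: starting from the explicit convolution formula for $M_w^k$, it writes $lag(M_w^k)=\tau\sum_{i_1,\dots,i_k}w_{i_1}\cdots w_{i_k}\bigl(k(p-1)-i_1-\dots-i_k\bigr)$, splits the last factor as $\sum_{l=1}^{k}(p-1-i_l)$, collapses all sums but one using $\sum_j w_j=1$ to get $lag(M_w^k)=k\,lag(M_w)$, and concludes by linearity of the lag in the weights. Your generating-polynomial formalism packages the same two ingredients (the convolution structure of the powers and the normalization $\sum_j w_j=1$) into the statements $lag(M)=\tau\,G_M'(1)$ and $G_{P(M_w)}=P\circ G_{M_w}$, after which the paper's sum-splitting becomes an instance of the chain/product rule: expanding $(G^k)'=kG^{k-1}G'$ at $X=1$ distributes the derivative over the $k$ factors exactly as the paper distributes $k(p-1)-\sum_l i_l$ over $l$. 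What your version buys is conceptual clarity and extra generality: $M\mapsto G_M$ is the transfer-function viewpoint, the lag appears as a derivation-like functional, and you get for free facts the paper's argument does not state, e.g. $lag(M_{w_1}\circ M_{w_2})=lag(M_{w_1})+lag(M_{w_2})$ for any two normalized weighted moving averages, not merely powers of a single one. What the paper's version buys is economy: no auxiliary formalism and no window-realignment bookkeeping; your handling of that bookkeeping is sound, and in fact slightly stronger than you claim, since the exponent $q-1-j$ of $G_M$ records the lag offset, so left-padding by zero weights leaves $G_M$ literally unchanged, not just its value and derivative at $1$. Finally, your identification of $G_{M_w}(1)=\sum_j w_j=1$ as the decisive hypothesis is exactly where the paper also uses $w\in K^*$, and your remark that the formula fails for general $w\in\mathbb{R}^p$ is consistent with the paper's definition of lag, which allows such maps but whose Proposition is stated only for $M_w$ with $w\in K^*$.
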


\begin{proof}
For every $k\in\mathbb{N}$, we have
$$\begin{array}{lll}
lag(M_w^k) & = & \displaystyle{\tau\sum_{i_1=0}^{p-1}\sum_{i_2=0}^{p-1}\dots\sum_{i_k=0}^{p-1} w_{i_1}w_{i_2}\dots w_{i_k}(k(p-1)-i_1-i_2-\dots-i_k)}\\
     & = & \displaystyle{\sum_{l=1}^{k}\tau\sum_{i_1=0}^{p-1}\sum_{i_2=0}^{p-1}\dots\sum_{i_k=0}^{p-1} w_{i_1}w_{i_2}\dots w_{i_k}(p-1-i_l)}\\
     & = & \displaystyle{\sum_{l=1}^{k}\tau\sum_{i_l=0}^{p-1} w_{i_l}(p-1-i_l)}
  \end{array}$$
since $\displaystyle{\sum_{j=0}^{p-1}w_j=1}$. Hence $\displaystyle{lag(M_w^k)=\sum_{l=1}^{k}lag(M_w)=k\,lag(M_w)}$.\\
Finally, $\displaystyle{lag(P(M_w))=\sum_{k=0}^da_k\,lag(M_w^k)=lag(M_w)\sum_{k=0}^dka_k}$.
\end{proof}

\subsection{\textbf{Weighted moving averages without lag}}

\noindent We can now get no-lag versions of weighted moving averages: this is the aim of Propositions~\ref{lag3} and~\ref{lag3b}.

\begin{proposition}\label{lag3}
The only polynomial of the form $P=aX+bX^2$ such that $a+b=1$ and $lag(P(M_w))=0$ is $P=2X-X^2$.\\
\end{proposition}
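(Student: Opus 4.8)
The plan is to reduce everything to the fundamental property of the lag established in Proposition~\ref{lag2}. First I would apply that formula to the specific polynomial $P=aX+bX^2$. Since $P'(X)=a+2bX$, we have $P'(1)=a+2b$, so Proposition~\ref{lag2} gives immediately
$$lag(P(M_w))=lag(M_w)\,(a+2b).$$
This turns the condition $lag(P(M_w))=0$ into a single linear equation in the coefficients $a$ and $b$.

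Next I would argue that $lag(M_w)\neq0$, so that the vanishing of the product forces the vanishing of the second factor. Since $w\in K^*$, every weight $w_j$ is strictly positive, and $lag(M_w)=\tau\sum_{j=0}^{p-1}w_j(p-1-j)$ is a sum of nonnegative terms at least one of which (for $j<p-1$, assuming $p\geq2$) is strictly positive; hence $lag(M_w)>0$. Consequently $lag(P(M_w))=0$ is equivalent to $a+2b=0$.

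It then remains to solve the linear system coming from the two constraints, namely $a+b=1$ (the normalization) together with $a+2b=0$ (the no-lag condition). Subtracting the first equation from the second yields $b=-1$, and back-substitution gives $a=2$. Thus the unique admissible polynomial is $P=2X-X^2$, and conversely one checks directly that this $P$ satisfies $a+b=1$ and $a+2b=0$, so it indeed has zero lag.

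I do not expect any genuine obstacle here: once Proposition~\ref{lag2} is invoked, the problem collapses to a $2\times2$ linear system. The only point requiring a moment's care is the nonvanishing of $lag(M_w)$, which is what lets us pass from the product being zero to the linear condition $a+2b=0$; this is guaranteed by the hypothesis $w\in K^*$ (with $p\geq2$).
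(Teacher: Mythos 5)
Your proof is correct and follows essentially the same route as the paper's: invoke the lag formula of Proposition~\ref{lag2} to get $lag(P(M_w))=(a+2b)\,lag(M_w)$, then solve the $2\times 2$ linear system $a+b=1$, $a+2b=0$ to obtain $(a,b)=(2,-1)$. The only difference is that you explicitly justify $lag(M_w)\neq 0$ (which is genuinely needed for uniqueness, and holds only when $p\geq 2$, since $p=1$ gives $M_w=id$ and makes every admissible $P$ lag-free), a point the paper's proof passes over silently.
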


\begin{proof}
We have $a+b=1$. And according to Proposition~\ref{lag2}, $lag(P(M_w))=(a+2b)lag(M_w)$, thus $a+2b=0$, so that the unique solution of the system is $(a,\,b)=(2,\,-1)$.
\end{proof}

\begin{proposition}\label{lag3b}
The only polynomial of the form $Q=aX+bX^2+X^3$ such that $a+b=0$ and $lag(Q(M_w))=0$ is $Q=3X-3X^2+X^3$.
\end{proposition}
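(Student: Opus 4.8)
The plan is to follow the argument of Proposition~\ref{lag3} essentially verbatim, since the situation is structurally identical: we have two affine constraints on the two free coefficients $a$ and $b$, which we expect to pin down a unique cubic. First I would record the normalization built into the hypotheses. In $Q=aX+bX^2+X^3$ the coefficient of $X^3$ is fixed at $1$ and the constant coefficient is $0$, so $Q(1)=a+b+1$; the assumption $a+b=0$ is therefore exactly the condition $Q(1)=1$ that makes $Q(M_w)$ reproduce constant sequences, playing the same role here that $a+b=1$ played in Proposition~\ref{lag3}.

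Next I would translate the no-lag requirement into a second linear equation by invoking the fundamental property of the lag, Proposition~\ref{lag2}. Writing $Q=\sum_k c_kX^k$ with $c_1=a$, $c_2=b$, $c_3=1$, that proposition gives $lag(Q(M_w))=lag(M_w)\,Q'(1)$, where $Q'(1)=a+2b+3$. Since $M_w$ is a genuine weighted moving average, its lag is nonzero (as the explicit values computed in Examples~\ref{ax1} and~\ref{ax2} illustrate), so imposing $lag(Q(M_w))=0$ forces $a+2b+3=0$.

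Finally I would solve the $2\times2$ system $a+b=0$ and $a+2b+3=0$. Subtracting the first equation from the second yields $b=-3$, whence $a=3$, so the unique admissible polynomial is $Q=3X-3X^2+X^3$, as claimed. I do not anticipate any real obstacle: the computation is a routine linear solve, and the only point meriting care is the tacit hypothesis $lag(M_w)\neq0$, without which the lag equation would be vacuous and the uniqueness assertion would fail.
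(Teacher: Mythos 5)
Your proposal is correct and follows exactly the paper's own argument: invoke Proposition~\ref{lag2} to get $lag(Q(M_w))=(a+2b+3)\,lag(M_w)$, then solve the linear system $a+b=0$, $a+2b+3=0$ to obtain $(a,b)=(3,-3)$. Your added remark that one tacitly needs $lag(M_w)\neq 0$ (true for any $w\in K^*$ with $p\geq 2$, since all weights are positive) is a careful touch the paper leaves implicit, but the substance of the proof is the same.
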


\begin{proof}
In the same way as in Proposition~\ref{lag3}, we have $a+b=0$ and $lag(Q(M_w))=(a+2b+3)lag(M_w)$, thus $a+2b+3=0$, and the unique solution of the system is $(a,\,b)=(3,\,-3)$.
\end{proof}

\subsection{\textbf{Using Nyquist criterium}}

\noindent Here we recall some results obtained by D\"{u}rschner in his article~\cite{D12}. We consider two weighted moving averages $M_{w_1}$ and $M_{w_2}$ with respectively $p_1$ and $p_2$ periods. Let $x\in E$, and let us set $y^{(1)}=M_{w_1}(x)$
and $y^{(2)}=M_{w_2}(y^{(1)})$. We consider the angles $a_1$ and $a_2$ as on Figure~\ref{fig:1}.

\begin{figure}[h]
\includegraphics[height=6cm]{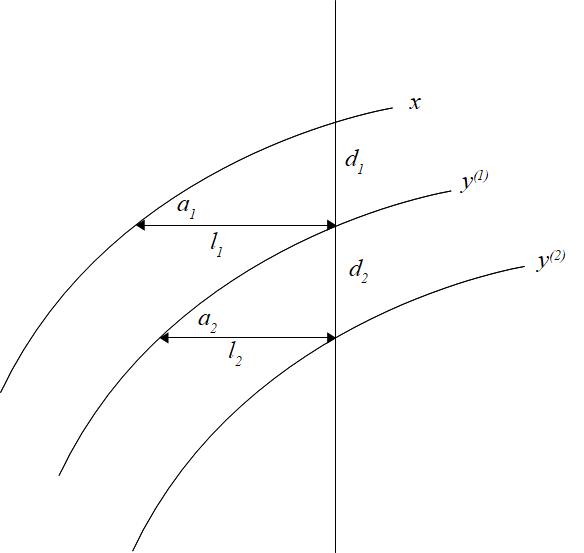}
\caption{Nyquist criterium}
\label{fig:1}
\end{figure}

\noindent We have $a_1\simeq a_2$, hence $\frac{d_1}{l_1}\simeq\frac{d_2}{l_2}$, i.e. $\frac{l_1}{l_2}\simeq\frac{d_1}{d_2}$. Moreover, $d_1=x_n-y_n^{(1)}$ and $d_2=y_n^{(1)}-y_n^{(2)}$,
thus $\alpha_n:=\frac{l_1}{l_2}\simeq\frac{x_n-y_n^{(1)}}{y_n^{(1)}-y_n^{(2)}}$.\\
We now define $z=(z_n)_{n\in\mathbb{N}}$ such that for every $n\in\mathbb{N}$, $\alpha_n:=\frac{l_1}{l_2}=\frac{z_n-y_n^{(1)}}{y_n^{(1)}-y_n^{(2)}}$, i.e. $z_n=(1+\alpha_n)y_n^{(1)}-\alpha_n y_n^{(2)}$.\\
Now, we want $\alpha_n$ not to depend on $n$. For this, we consider the lags of $M_{w_1}$ and $M_{w_2}$ that equal $\frac{(p_1-1)\tau}{2}$ and $\frac{(p_2-1)\tau}{2}$ when $M_{w_1}$ and $M_{w_2}$ are classical moving averages, so that $\frac{lag(M_{w_1})}{lag(M_{w_2})}=\frac{p_1-1}{p_2-1}$, and we set $\alpha=\frac{p_1-1}{p_2-1}$. Finally, we define a new weighted moving average $N$ by $z_n=(1+\alpha)y_n^{(1)}-\alpha y_n^{(2)}$ for every $n\in\mathbb{N}$, where $z=N(x)$. In terms of linear maps, we have $$N=(1+\alpha)M_{w_1}-\alpha M_{w_2}\circ M_{w_1}.$$
The stability Nyquist criterium concerning the choice of $p_1$ and $p_2$ is $\displaystyle{\frac{p_1}{p_2}\geq 2}$.

\begin{definition}\label{defMMPNy}
Let $M_{w_1}$ and $M_{w_2}$ be two \emph{simple} weighted moving averages\footnote{According to example~\ref{ax2}, their weights are defined by $w_1=\frac{2}{p_1(p_1+1)}(1,\,2,\dots,\,p_1)$ and $w_2=\frac{2}{p_2(p_2+1)}(1,\,2,\dots,\,p_2)$.} with respectively $p_1$ and $p_2$ periods and $p_1\geq 2\,p_2$. Let us set $\alpha=\frac{p_1-1}{p_2-1}$.\\
Then, we call \emph{Nyquist moving average} with periods $p_1,\,p_2$ the element $N_{p_1,p_2}$ of $L(E)$ defined by $N_{p_1,p_2}=(1+\alpha)M_{w_1}-\alpha M_{w_2}\circ M_{w_1}$.
\end{definition}

\section{Technical indicators without lag}\label{section3}

\noindent Here, we make use of the results of Section~\ref{secondsection} in order to define technical indicators ``without lag''. We begin with the exponential moving average, then define the MACD and use them for Elder's impulse system.

\subsection{\textbf{Exponential moving average without lag}}

\noindent For every $\alpha\in]0,\,1[$, we \emph{define} the exponential moving average without lag as the element $ME_{\alpha,wl}:=P(ME_\alpha)$ of $L(E)$, where $P$ is defined in Proposition~\ref{lag3}.

\subsection{\textbf{MACD without lag}}

\noindent $\bullet$ Let us recall the definition of the MACD (moving average convergence divergence) introduced by Gerald Appel in 1979 in his financial newsletter ``Systems and Forecasts'' and presented in his book~\cite{A85}.\\
First we set $MACD=ME_{12}-ME_{26}$, then $MACDS=ME_9\circ MACD$, and $MACDH=MACD-MACDS$. These three maps belong to $L(E)$.\\

\noindent $\bullet$ We \emph{define} the MACD without lag as follows: first we set $$MACD_{wl}=ME_{12,wl}-ME_{26,wl},$$ then $$MACDS_{wl}=ME_{9,wl}\circ MACD_{wl},$$ and $$MACDH_{wl}=MACD_{wl}-MACDS_{wl}.$$ These three maps also belong to $L(E)$.\\

\noindent $\bullet$ In the same way, we \emph{define} the Nyquist-MACD as follows: first we set $$MACD_{N}=N_{12,3}-N_{26,6},$$ then $$MACDS_{N}=N_{9,3}\circ MACD_{N},$$ and $$MACDH_{N}=MACD_{N}-MACDS_{N}.$$ Here again these three maps belong to $L(E)$.

\subsection{\textbf{Elder's impulse system without lag}}\label{parElder}

\noindent By making use of the exponential moving average without lag and the MACD without lag, we can give a ``no-lag version'' of Elder's impulse system. Let us first recall the definition of the impulse system introduced by Alexander Elder in his best-seller~\cite{E02}.\\
Let us set $C=\{R,\,G,\,B\}$\footnote{``R'' (resp. ``G'', ``B'') stands for ``red'' (resp. ``green'', ``blue'').} and denote by $F$ the set $C^\mathbb{N}$ of sequences with values in $C$. Then, Elder's impulse system can be defined in an algorithmic way as follows.

\begin{definition}\label{IS}
Elder's impulse system is the map $IS$ from $E$ to $F$ defined, for every $x\in E$, by $IS(x)=y$, where $y_0=B$ and for every $n\in\mathbb{N}^*$,
$$y_n=\left\{\begin{array}{ll}
                 G & \textrm{if}\ ME_{12}(x)_n>ME_{12}(x)_{n-1}\ \textrm{and}\ MACDH(x)_n>MACDH(x)_{n-1}\\
                 R & \textrm{if}\ ME_{12}(x)_n<ME_{12}(x)_{n-1}\ \textrm{and}\ MACDH(x)_n<MACDH(x)_{n-1} \\
                 B & \textrm{else}
               \end{array}
\right..$$
\end{definition}

\noindent Let us now give a ``no-lag version'' of Elder's impulse system

\begin{definition}\label{ISSR}
Elder's impulse system without lag is the map $IS_{wl}$ from $E$ to~$F$ defined, for every $x\in E$, by $IS_{wl}(x)=y$, where $y_0=B$ and for every $n\in\mathbb{N}^*$,
\begin{small}
$$y_n=\left\{\begin{array}{ll}
                 G & \textrm{if}\ ME_{12,wl}(x)_n>ME_{12,wl}(x)_{n-1}\ \textrm{and}\ MACDH_{wl}(x)_n>MACDH_{wl}(x)_{n-1}\\
                 R & \textrm{if}\ ME_{12,wl}(x)_n<ME_{12,wl}(x)_{n-1}\ \textrm{and}\ MACDH_{wl}(x)_n<MACDH_{wl}(x)_{n-1} \\
                 B & \textrm{else}
               \end{array}
\right..$$
\end{small}
\end{definition}

\noindent We also define the Nyquist-Elder's impulse system.

\begin{definition}\label{ISSRN}
Nyquist-Elder's impulse system is the map $IS_{N}$ from $E$ to~$F$ defined, for every $x\in E$, by $IS_{N}(x)=y$, where $y_0=B$ and for every $n\in\mathbb{N}^*$,
\begin{small}
$$y_n=\left\{\begin{array}{ll}
                 G & \textrm{if}\ N_{12,3}(x)_n>N_{12,3}(x)_{n-1}\ \textrm{and}\ MACDH_{N}(x)_n>MACDH_{N}(x)_{n-1}\\
                 R & \textrm{if}\ N_{12,3}(x)_n<N_{12,3}(x)_{n-1}\ \textrm{and}\ MACDH_{N}(x)_n<MACDH_{N}(x)_{n-1} \\
                 B & \textrm{else}
               \end{array}
\right..$$
\end{small}
\end{definition}

\subsection{\textbf{Comparison of the three versions of Elder's impulse system}}

\noindent Here we use a very simple trading system\footnote{In this trading system, every position is automatically closed one day before the end of the test.} in order to compare the three versions of Elder's impulse system given in section~\ref{parElder}. See for example \cite{K19} and \cite{JT09} for more information about trading systems. We use $x=(x_n)_{n\in\llbr0,\,d\rrbr}$ the daily values of S\&P index on the time period from 2017-11-01 to 2018-10-31.

\begin{algorithm}\label{trsyst}
(very simple trading system)\\
\verb"For n" $\in\llbr0,\,d\rrbr$\verb":"\\
\verb"    Long entry : if f(x)_n=G"\\
\verb"        then buy 1 mini contract"\\
\verb"    Short entry : if f(x)_n=R"\\
\verb"        then sell short 1 mini contract"\\
\verb"    Long exit : if f(x)_n=R"\\
\verb"        then sell 1 mini contract"\\
\verb"    Short exit : if f(x)_n=G"\\
\verb"        then buy 1 mini contract"
\end{algorithm}

\noindent We use Algorithm~\ref{trsyst} with $f=IS$, $f=IS_{wl}$ and $f=IS_N$. The results\footnote{Here the transaction cost for every entry/exit is $\$3$.} are given by Table~\ref{tab:1}, in which all values are in USD. Let us note that the value of the S\&P index is $2\,572.625$ (resp. $2\,706.125$) on 2017-11-01 (resp. 2018-10-31), hence a profit of $TPI=\$ 6\,675$ for one mini contract during this period.

\begin{table}[h]
\caption{Comparison of the three various trading systems}
\label{tab:1}
\begin{tabular}{||l||r|r|r||}
\hline\noalign{\smallskip}
 & $f=IS$ & $f=IS_{wl}$ & $f=IS_N$\\
\noalign{\smallskip}\hline\noalign{\smallskip}
\textrm{Number of trades} & $27$ & $35$ & $52$ \\
\textrm{Total net profit} & $13\,549$ & $20\,489$ & $31\,395$ \\
\textrm{Percentage of winning trades} & $52\%$ & $46\%$ & $50\%$\\
\textrm{Average net profit per trade} & $502$ & $585$ & $604$ \\
\noalign{\smallskip}\hline\noalign{\smallskip}
\textrm{Total net profit of winning trades ($TP$)} & $32\,452$ & $41\,278$ & $49\,835$ \\
\textrm{Average net profit per winning trade ($AP$)} & $2\,318$ & $2\,580$ & $1\,917$ \\
\noalign{\smallskip}\hline\noalign{\smallskip}
\textrm{Total net lost of losing trades ($TL$)} & $-18\,903$ & $-20\,789$ & $-18\,440$ \\
\textrm{Average net lost per losing trade ($AL$)} & $-1\,454$ & $-1\,094$ & $-709$ \\
\textrm{Greatest lost between two winning trades} & $-10\,231$ & $-7\,793$ & $-3\,761$ \\
\noalign{\smallskip}\hline\noalign{\smallskip}
\textrm{Total net profit of long trades} & $11\,322$ & $10\,249$ & $14\,930$ \\
\textrm{Average net profit per long trade} & $755$ & $539$ & $515$ \\
\noalign{\smallskip}\hline\noalign{\smallskip}
\textrm{Total net profit of short trades} & $2\,227$ & $10\,240$ & $16\,465$ \\
\textrm{Average net profit per short trade} & $186$ & $640$ & $716$ \\
\noalign{\smallskip}\hline\noalign{\smallskip}
\textrm{Profit factor $TP/|TL|$} & $1.72$ & $1.99$ & $2.70$ \\
\textrm{Ratio $AP/|AL|$} & $1.59$ & $2.36$ & $2.70$ \\
\textrm{Ratio $TP/TPI$} & $4.86$ & $6.18$ & $7.46$ \\
\noalign{\smallskip}\hline
\end{tabular}
\end{table}

\noindent We observe that \emph{here} the Nyquist-Elder's impulse system is much better than the Elder's impulse system without lag, which is itself better than the classical impulse system: the information given by Nyquist-Elder's impulse system is indeed closer to the instantaneous value of the index since it has less delay than the classical impulse system. We can also note that the number of trades as well as the average net profit per trade are increasing when $f$ is respectively equal to $IS$, $IS_{wl}$ and $IS_N$. And the repartition of profit among long and short trades is more uniform with $IS_{wl}$ and $IS_N$ than with $IS$.\\

\noindent Figure~\ref{fig:2} eventually shows the S\&P index from 2017-11-01 to 2018-10-31, with the Nyquist moving averages $N_{12,3}$ (dotted line) and $N_{26,3}$ (solid line), the graphs of $MACD_N$ and $MACDS_N$ with the histogram $MACDH_N$, and the three versions of Elder's impulse system (from bottom to top: $IS$, $IS_{wl}$ and $IS_N$).

\begin{figure}[h]
\includegraphics[height=6cm]{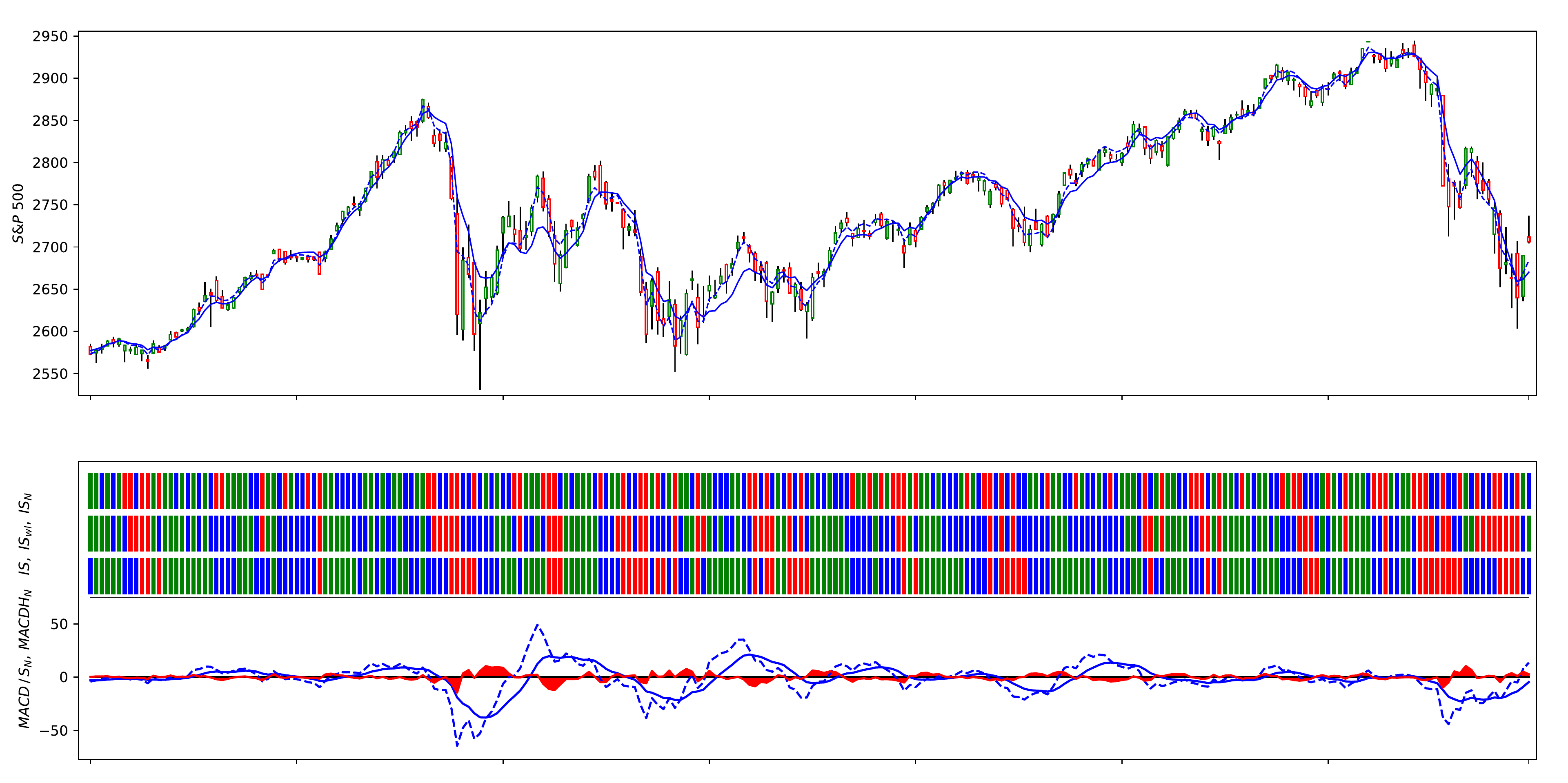}
\caption{S\&P index from 2017-11-01 to 2018-10-31}
\label{fig:2}
\end{figure}

\vspace{-.5cm}

\begin{small}

\end{small}


\begin{thebibliography}{99}

\bibitem[A85]{A85}
Gerald Appel, The Moving Average Convergence-Divergence Trading Method, \textit{Traders Pr} (1985).

\bibitem[B17]{B17}
Eric Benhamou, Trend without Hiccups, a Kalman filter approach, \textit{IFTA Journal}, 38--46 (2017).

\bibitem[D12]{D12}
Manfred G. D\"{u}rschner, Moving Averages 3.0, \textit{IFTA Journal}, 27--31 (2012).

\bibitem[E01]{E01}
John F. Ehlers, Rocket Science for Traders, \textit{John Wiley \& Sons}, New York (2001).

\bibitem[E02]{E02}
Alexander Elder, Come Into My Trading Room, \textit{John Wiley \& Sons}, New York (2002).

\bibitem[EW10]{EW10}
John F. Ehlers, Ric Way, Zero Lag (Well, Almost), \textit{Stocks \& Commodities} \textbf{28:11}, 30--35 (2010).

\bibitem[JT09]{JT09}
Emilio Tomasini, Urban Jaekle, Trading Systems: A New Approach to System Development and Portfolio Optimisation, \textit{Harriman House Publishing} (2009).

\bibitem[K19]{K19}
Perry J. Kaufman, Trading Systems and Methods, \textit{John Wiley \& Sons}, New York (2019).

\bibitem[M94]{M94}
Patrick Mulloy, Smoothing Data With Less Lag, \textit{Stocks \& Commodities} (1994).


\end{thebibliography}
\end{document}